\newcommand{\mynot}[1]{\neg{#1}}
\newcommand{\myand}{\wedge}
\newcommand{\myor}{\vee}
\newcommand{\union}{\cup}
\newcommand{\mtwosat}{max-2-sat\xspace}
\newcommand{\MTwoSat}{Max-2-sat\xspace}
\newcommand{\vam}{array-aware-matching\xspace}
\newcommand{\VAM}{Array-Aware-Matching\xspace}
\newcommand{\vvv}[1]{\underline{#1}}
\definecolor{mygreen}{rgb}{0,0.6,0}
\definecolor{mygray}{rgb}{0.5,0.5,0.5}
\begin{document}
Link to published version: \url{https://dl.acm.org/doi/10.1145/3611661}

\vspace{1cm}

Bibtex entry for citations:
{\scriptsize
\begin{verbatim}
@article{10.1145/3611661,
author = {Fioravanti, Massimo and Cattaneo, Daniele and Terraneo, Federico and Seva, Silvano and Cherubin, Stefano
and Agosta, Giovanni and Casella, Francesco and Leva, Alberto},
title = {Array-Aware Matching: Taming the Complexity of Large-Scale Simulation Models},
year = {2023},
publisher = {Association for Computing Machinery},
address = {New York, NY, USA},
issn = {0098-3500},
url = {https://doi.org/10.1145/3611661},
doi = {10.1145/3611661},
journal = {ACM Trans. Math. Softw.},
month = {jul}
}
\end{verbatim}
}

\clearpage

\title{Array-Aware Matching: Taming the Complexity of Large-Scale Simulation Models}

\author{Massimo Fioravanti}
\email{massimo.fioravanti@mail.polimi.it}
\affiliation{%
  \department{Department of Electronics, Information, and Bioengineering}
  \institution{Politecnico di Milano}
  \country{Italy}%
}
\author{Daniele Cattaneo}
\email{daniele.cattaneo@polimi.it}
\affiliation{%
  \department{Department of Electronics, Information, and Bioengineering}
  \institution{Politecnico di Milano}
  \country{Italy}%
}
\author{Federico Terraneo}
\email{federico.terraneo@polimi.it}
\affiliation{%
  \department{Department of Electronics, Information, and Bioengineering}
  \institution{Politecnico di Milano}
  \country{Italy}%
}
\author{Silvano Seva}
\email{silvano.seva@polimi.it}
\affiliation{%
  \department{Department of Electronics, Information, and Bioengineering}
  \institution{Politecnico di Milano}
  \country{Italy}%
}
\author{Stefano Cherubin}
\email{stefano.cherubin@ntnu.no}
\affiliation{%
  \department{Department of Computer Science}
  \institution{Norges Teknisk-Naturvitenskapelige Universitet}
  \country{Norway}%
}
\author{Giovanni Agosta}
\email{giovanni.agosta@polimi.it}
\affiliation{%
  \department{Department of Electronics, Information, and Bioengineering}
  \institution{Politecnico di Milano}
  \country{Italy}%
}
\author{Francesco Casella}
\email{francesco.casella@polimi.it}
\affiliation{%
  \department{Department of Electronics, Information, and Bioengineering}
  \institution{Politecnico di Milano}
  \country{Italy}%
}
\author{Alberto Leva}
\email{alberto.leva@polimi.it}
\affiliation{%
  \department{Department of Electronics, Information, and Bioengineering}
  \institution{Politecnico di Milano}
  \country{Italy}%
}

\renewcommand{\shortauthors}{Massimo Fioravanti et al.}

\begin{abstract}
Equation-based modelling is a powerful approach to tame the complexity of large-scale simulation problems. Equation-based tools automatically translate models into imperative languages. When confronted with nowadays’ problems, however, well assessed model translation techniques exhibit scalability issues, that are particularly severe when models contain very large arrays. In fact, such models can be made very compact by enclosing equations into looping constructs, but reflecting the same compactness into the translated imperative code is not trivial. In this paper, we face this issue by concentrating on a key step of equations-to-code translation, the equation/variable matching. We first show that an efficient translation of models with (large) arrays needs awareness of their presence, by defining a figure of merit to measure how much the looping constructs are preserved along the translation. We then show that the said figure of merit allows to define an optimal array-aware matching, and as our main result, that the so stated optimal array-aware matching problem is NP-complete. As an additional result, we propose a heuristic algorithm capable of performing array-aware matching in polynomial time. The proposed algorithm can be proficiently used by model translator developers in the implementation of efficient tools for large-scale system simulation.

\end{abstract}

%%
%% The code below is generated by the tool at http://dl.acm.org/ccs.cfm.
%% Please copy and paste the code instead of the example below.
%%
\begin{CCSXML}
<ccs2012>
   <concept>
       <concept_id>10010147.10010341</concept_id>
       <concept_desc>Computing methodologies~Modeling and simulation</concept_desc>
       <concept_significance>500</concept_significance>
       </concept>
   <concept>
       <concept_id>10003752.10003809.10003635.10003644</concept_id>
       <concept_desc>Theory of computation~Network flows</concept_desc>
       <concept_significance>300</concept_significance>
       </concept>
   <concept>
       <concept_id>10003752.10003777.10003778</concept_id>
       <concept_desc>Theory of computation~Complexity classes</concept_desc>
       <concept_significance>300</concept_significance>
       </concept>
   <concept>
       <concept_id>10011007.10011006.10011041</concept_id>
       <concept_desc>Software and its engineering~Compilers</concept_desc>
       <concept_significance>300</concept_significance>
       </concept>
 </ccs2012>
\end{CCSXML}

\ccsdesc[500]{Computing methodologies~Modeling and simulation}
\ccsdesc[300]{Theory of computation~Network flows}
\ccsdesc[300]{Theory of computation~Complexity classes}
\ccsdesc[300]{Software and its engineering~Compilers}

\maketitle
                       
% !TEX root = ../TOMS-2023.tex
\section{Introduction}
\label{sec:intro}

In modern engineering, dynamic modelling and simulation are ubiquitous~\cite{bib:SchluseEtAl-2018a,bib:BuissonBelloir-2020a}. Besides providing ``virtual prototypes''~\cite{bib:MejiaCarvajal-2017a} to streamline plant and control design activities~\cite{bib:VanBeekEtAl-2014a,bib:VerrietEtAl-2019a}, ``Digital Twins'' -- significantly based on simulation~\cite{bib:AitEtAl-2019a} -- are nowadays the backbone of advanced controls~\cite{bib:AgachiEtAl-2007a,bib:OomenSteinbuch-2020a}, predictive, condition-based and autonomous maintenance~\cite{bib:AivaliotisEtAl-2019a,bib:ChengEtAl-2018a,bib:KhanEtAl-2020a}, anomaly detection, forecast and mitigation~\cite{bib:MarquezEtAl-2018a,bib:TaoEtAl-2018a,bib:HeEtAl-2021a}, continuous integration~\cite{bib:MundEtAl-2018a}, lifelong asset management~\cite{bib:MacchiEtAl-2018a} and many other applications, see for example the survey in~\cite{bib:CiminoEtAl-2019a}. As a result, unprecedented challenges need facing for rapidly creating, modifying and running simulation models of steadily growing size and complexity~\cite{bib:FitzgeraldEtAl-2019a}.

Focusing on simulation models made of Differential and Algebraic Equations (DAE), the \emph{scenario} just sketched has boosted the adoption of declarative, Equation-Based (EB) modelling languages as opposite to procedural, imperative programming ones~\cite{bib:WetterHaugstetter-2006a}. The key feature of EB languages is the ability of separating the activities of writing a model and of producing its solution algorithms. This ability stems from the fact that the fundamental statement in EB languages is the \emph{equation}. Contrary to the \emph{assignment}, where an \emph{l-value} receives the result of computing the expression on the right hand side, an equation just prescribes that the expressions on the left and the right hand side must be made equal -- within convenient tolerances -- at every point in time when the solution of a model is computed during its simulation.

Said otherwise, while assignments directly compose the algorithm to compute the model solution, equations just impose constraints to that solution, therefore saying nothing about the solver (in general, numeric) that will be used to compute it. To synthetically express this separation between describing the model and computing its solution, EB models are called \emph{declarative}.

In synthesis, then, EB languages relieve the analyst from the task of turning equations into imperative code to perform their numerical integration, significantly helping to tame the complexity and rapid evolution of modern simulation problems~\cite{bib:CellierKofman-2006a}. It is the task of a \emph{translator} to automatically turn a declarative model into an equivalent code in some imperative programming language~\cite{bib:FritzsonEtAl-2009a}, which is then fed to a compiler.

This translator-compiler workflow was devised at the outset of EB languages, with the aim of decoupling the generation of imperative code (translation) from its optimised compilation into machine code---a task for which e.g. C compilers are very well suited. However, today's modelling and simulation problems exhibit new characteristics, that require to re-discuss the above translation workflow. A prominent such characteristic, on which we focus in this paper, is the presence of large (and possibly multi-dimensional) array variables and equations. This feature is distinctive of ``large-scale'' models. Think for example of a 3D thermal model for a solid body with fine-grained spatial discretisation: the model will contain energy dynamic balance equations for each of the many subvolumes into which the solid will be partitioned, and these equations will account for thermal exchanges with the adjacent volumes. It is quite natural to write such a model compactly in EB form by defining suitable array variables and equations, in the latter case by means of looping constructs (examples follow starting from~\cref{sec:defs-motivation}).

In such cases, as we will show, a trade-off is easily observed. Writing the model directly as imperative code is far more complex, error prone and hard to maintain than adopting an EB declarative framework and obtaining the imperative code by automatic translation. But on the other hand, the code obtained by automatic EB-to-imperative translation is significantly less efficient than the one manually written as imperative.

We argue that the origin of this inefficiency mainly resides in the way EB translators manage array variables and equations. Current production-grade EB translators just treat each component of an array variable or equation as an individual scalar one, which results in a loss of structural information that imperative language compilers cannot efficiently recover~\cite{bib:AgostaEtAl-2019b}. As such, we argue that to improve both the translation time and the efficiency of the produced imperative code, it is necessary to make the translation process ``array-aware''.

In this paper we offer a contribution to this end, aiming \emph{both} for an efficient translation and an efficient imperative code. In detail,

\begin{enumerate}

\item we define a figure of merit to quantify how much the looping constructs that make an equation model
      compact carry over to its imperative translation; building on this figure of merit, we consequently
      define as \emph{optimal} an array-aware matching that maximally preserves the said looping constructs;

\item we prove that the optimal array-aware matching problem is NP-complete -- contrary to scalar matching, 
      which can be solved in polynomial time and where no such optimality makes sense;

\item we propose a heuristic algorithm to approximate optimal array-aware matching in polynomial time.

\end{enumerate}

To the best of our knowledge, we are the first to introduce an idea of matching optimality tied to the efficiency of the obtained imperative code, as well as to propose a heuristics that aims for that optimality besides for a fast matching process.

\paragraph{Organisation of the paper}

The rest of the paper is organised as follows.
In \cref{sec:bg-translation} we briefly introduce some definitions and the architecture of state-of-the-art approaches.
Then, we discuss the history of EB model translators and other array-aware approaches in \cref{sec:related-work}.
We delineate the scope and purpose of our contribution in \cref{sec:defs-motivation}.
In \cref{sec:graphs-pb-statement}, we define \emph{array-aware matching} and its optimality metric,
while we prove its NP-completeness in \cref{sec:complexity-proof}.
Finally, in \cref{sec:algos}, we show an approximate algorithm for array-aware matching, 
and in \cref{sec:conclusions} we draw some conclusions and highlight future research directions.

% !TEX root = ../TOMS-2023.tex
\section{Background on model translation}
\label{sec:bg-translation}

In this section we outline the foundations of automatic EB-to-imperative translation and compilation. To avoid confusion between scalar and array problems, we first provide a few definitions.

\begin{definition}[scalar variable]
A \emph{scalar variable} is an instance of a system property, identified by a name, whose value at every instant is fully defined by a scalar number and -- possibly -- a unit of measurement.
\end{definition}

\begin{definition}[array variable]
An \emph{array variable} is a collection, identified by a name, of one or more scalar variables, each one referenceable through one or more integer indices.
\end{definition}

\begin{definition}[array equation]
\label{def:arrayequation}
An \emph{array equation} is a collection of one or more scalar equations, expressed compactly as a single parametric equation that references by index one or more scalar components of one or more array variables.
\end{definition}

\begin{definition}[array dimensionality and size vector]
The \emph{dimensionality} of an array is the number of dimensions of that array, that is, the number of integer indices needed to reference a single scalar component in it. We assume by convention that the said indices are 1-based. Their maximum values collectively form the array \emph{size vector}.
\end{definition}

It follows that, in EB modelling languages, \emph{array variable}s are akin to the concept of ordinary multidimensional arrays, commonly found in many programming languages. Also, in EB modelling languages, an \emph{array equation} is obtained by encasing a scalar one in one or more nested looping constructs, that define the indices of the contained scalar variables within the arrays of which they are part, as well as the ranges for the said indices.

It is important to highlight that the mentioned looping constructs -- differently from those of imperative programming languages -- allows the user to predicate on array equations, such as
\begin{equation}
 x[i]=y[i]\; \forall i \in [1,3];
 \label{eqn:bg-rolled}
\end{equation}
to be intended as an abbreviation of
\begin{equation}
 x[1]=y[1];\;x[2]=y[2];\;x[3]=y[3];
 \label{eqn:bg-unrolled}
\end{equation}

This noted, the translation-compilation process -- as per the current state of the art in both research and production tools, see e.g.~\cite{bib:CellierKofman-2006a,bib:FritzsonEtAl-2019a} and~\cite{bib:Dymola,bib:JModelica,bib:OpenModelica,bib:SimulationX} respectively -- can be divided into the following steps.

\begin{description}

\item[Flattening] The model equations, independently of the way they were input by the user -- e.g. as a single set, by hierarchically instantiating and interconnecting subsystems, or anyhow else -- are brought to be one set of scalar DAEs. This step contains a sub-step named \emph{loop unrolling}, in which each  expression in the form shown in \cref{eqn:bg-rolled} is replaced by its set of scalar components, as shown in \cref{eqn:bg-unrolled}. The outcome of flattening is thus a DAE system with scalar equations and variables.

\item[Matching] This step (hereinafter denoted as \emph{scalar} matching when confusion may occur) consists of coupling each (scalar) equation to one (scalar) variable, meaning that the equation is the initial candidate for computing the variable at simulation time. A failure in matching indicates a model inconsistency (e.g. and most typically, an equations/variables imbalance).
In the general case, matching may also require an index reduction sub-process, implemented by methods such as the Pantelides algorithm~\cite{bib:Pantelides-1988a}. Our paper does not address index reduction.

\item[Scheduling] This step determines the order in which the equations are solved.
The (scalar) equations of the system are ordered accordingly to their mutual dependencies, as established by the matching process.
For example, the equation matched with variable $v$ -- that is, the candidate one to compute $v$ -- is scheduled before all other equations in which $v$ appears.
The ideal result would allow to compute the solution variable by variable, in sequence. This is hardly ever obtained, however.
During the scheduling step, some cyclic dependency among variables may arise, and as a result the one-to-one relationship established by the matching process between those variables and their candidate equations cannot be maintained.
Cyclic dependencies indicate the existence of a so-called \emph{Strongly Connected Component} (SCC): the involved variables will need computing all together, most often numerically.
The Tarjan algorithm~\cite{bib:Tarjan-1972a} is a commonly used means to determine the equation solution order and to identify SCCs.

\item[Code generation] The last step is the generation of imperative simulation code for a specific choice of the numerical integration algorithm. The simulation code can be self-contained for certain numerical integration algorithms, such as explicit ones, or can rely on external solver libraries, such as those in the SUNDIALS suite~\cite{bib:HindmarshEtAl-2005a}.

\end{description}

After the translation and compilation process is complete, the obtained executable code is run to produce the simulation output, in the form of a table with the value of the model variables as a function of simulation time.
This process is often part of a graphical modelling environment for rapid prototyping. Once the modeller decides to perform a simulation, all the translation, compilation and execution steps are on the critical path toward getting the simulation results. Consequently, the shortening of compilation and simulation times is especially important for this kind of use-cases.

% !TEX root = ../TOMS-2023.tex
\section{Related work}
\label{sec:related-work}

In this section we spend some words to relate our proposal to the research \emph{scenario} on EB modelling in general, and to neighbouring research on EB-to-imperative translation in particular.
 
In the landscape of simulation languages, EB ones appeared and gained visibility in the 80s/90s of the last century; notable examples are Omola~\cite{bib:Andersson-1989a}, Dymola~\cite{bib:Elmqvist-1979a} and gPROMS~\cite{bib:BartonPantelides-1994a}. A taxonomy would stray from the scope of this paper; the reader interested in the historical \emph{panorama} can refer e.g. to~\cite{bib:Cellier-1983a}. Worth noticing, however, is the common ancestor for the boost of the declarative approach stemming from studies such as~\cite{bib:AstromKreutzer-1986a} and~\cite{bib:ElmqvistMattsson-i989a}, where the idea that simulation languages had to abandon the imperative setting was set forth and preliminarily exploited. Research on the matter thus focused at first on the model manipulation~\cite{bib:HuntCremer-1997a,bib:MattsonEtAl-1997a} required by going declarative, and after a long systematisation process, this resulted in the birth of the Modelica language~\cite{bib:MattssonEtAl-1998a} to which we refer herein (though all the ideas we propose are general to the OOM context).

The engineering use of EB languages and tools sustainably spread out in various domains, ranging
from the chemical~\cite{bib:AsteasuainEtAl-2001a}
and process industry~\cite{bib:GarciaEtAl-2002a}
to power generation~\cite{bib:CasellaLeva-2006a}
and transmission~\cite{bib:SusukiHikihara-2002a},
mechatronics~\cite{bib:VanAmerongenBreedveld-2003a}
and robotics~\cite{bib:HirzingerEtAl-2005a},
automotive~\cite{bib:JairamEtAl-2008a},
and vehicles at large~\cite{bib:DonidaEtAl-2008a},
aerospace~\cite{bib:MoormannLooye-2002a},
buildings~\cite{bib:SowellHaves-2001a}
and more,
including control design~\cite{bib:CasellaEtAl-2008a}
and diagnostics~\cite{bib:BunusLunde-2008a};
the papers in the necessarily limited list above also contain interesting bibliographies for the reader willing to investigate further.

Together with testifying the success of the EB approach, however, the expansion just mentioned also shed light on some relevant limitations of the existing EB tools~\cite{bib:SahlinEtAl-2003a} -- not of EB languages by themselves, it is worth stressing -- especially when dealing with large-size models~\cite{bib:LinkEtAl-2009a,bib:CasellaEtAl-2017a}. This was the motivation for a first wave of tool optimisation, having as a major point the introduction of sparse solvers, a well-treated and long-lasting matter in domain-specific tools -- see e.g.~\cite{bib:Friedman-1992a,bib:StadtherrWood-1984a,bib:StadtherrWood-1984b,bib:Li-2010a} -- but a source of challenges in the inherently multi-domain EB one~\cite{bib:Paloschi-1996a,bib:WangKolditz-2010a}. Examples of this research -- with specific reference to Modelica given our scope -- are~\cite{bib:SandholmEtAl-2006a,bib:Casella-2015a}.

The possibility of \emph{solving} large models fast enough to widen the EB applicability perimeter evidenced however a second type of tool limitation, concerning the \emph{translation} rather than the solution of such models~\cite{bib:Sjolund-2015a,bib:SchweigerEtAl-2020a}. The matter became critical in recent years, together with the emergence of problems that require model-based prototyping~\cite{bib:MaloneEtAl-2016a} and can scale up to the order of $10^5$ equations. When such models become part of the inherently iterative engineering process, the time spent in translating and compiling them can be comparable to that spent in running simulations, if not even dominant~\cite{bib:BaldinoThesis-2018}. For the sake of clarity it is worth noticing that the million equation barrier was already approached in the past~\cite{bib:TezduyarEtAl-1996a} and in some domains nowadays well trespassed~\cite{bib:RouetEtAl-2020a} by simulation tools, but these tools \emph{are not of the EB type}, and most notably, do not separate model description and solution---which is a primary goal of the EB approach.

As a result, EB tools are nowadays undergoing a second wave of optimisation, directed to efficient translation.
Open-source parsers~\cite{pop2019frontend} for EB languages such as Modelica are available providing some degree of array preservation, thereby enabling the research community to experiment with making the translation pipeline array-aware.
In this relatively new effort, a primary objective is to achieve an $\mathrm{O}(1)$ scaling of the translation time with the size of the model arrays, that as already noted are the main cause for the inefficiency of scalar-only model manipulation. In this context, the nearest neighbouring work to our research is the paper by Zimmermann \emph{et al.}~\cite{bib:ZimmermannEtAl-2019a}, who introduce the concept of ``set-based graph'' as a means to re-state the matching problem (originally scalar) in such a way to achieve an $\mathrm{O}(1)$ translation, together with proposing algorithms for other manipulation steps related to matching, such as the management of strongly connected components and scheduling.

The main difference of our research with respect to~\cite{bib:ZimmermannEtAl-2019a} is a twofold instead of a single goal. More precisely, we do not aim just for an efficient translation, but also for an efficient simulation code. If the efficiency of the simulation code is taken into account, 
the number of looping constructs in the EB model that are preserved in the imperative one comes to matter a lot. The set-based approach of \cite{bib:ZimmermannEtAl-2019a} is not designed to take this aspect into account.
Aiming at loop preservation straightforwardly entails the introduction of an idea of optimality. This moves the focus from array-aware matching to \emph{optimal} array-aware matching, and owing to its NP completeness, to the need for heuristics.

Other works have addressed the \vam problem, such as~\cite{bib:SchuchartEtAl-2015} which correctly noticed that preserving looping constructs can positively impact both translation and simulation time. They also present a prototype translator that is limited to handling systems without algebraic loops and resorts to flattening the model completely in cases where their \vam algorithm does not produce a solution.
The paper \cite{otter2017transformation} addresses array-aware index reduction, and the corresponding Modia implementation~\cite{bib:ModiaBase} also includes an \vam algorithm that however is very simple and cannot split variable and equation nodes, thus requiring the modeller to pre-process the input model so as to make sure that every array variable can be matched to exactly one array equation.
Compared to the two previously quoted works, our paper introduces the concept of optimal \vam and proves that achieving such optimality is an NP complete problem, as well as presenting a more complete \vam algorithm.

% !TEX root = ../TOMS-2023.tex
\section{Research motivation}
\label{sec:defs-motivation}

In this section, we discuss the inefficiencies that arise from a non array-aware model translation.
To ground the discussion on an example, consider the model of a thermally insulated metal wire with prescribed temperatures at its ends. The evolution of this system is ruled by the one-dimensional Fourier equation. Carrying out a uniform spatial discretisation with the finite-volume approach results in the following system of differential equations:
\begin{equation}
\label{eq:wire}
c \dot{T}_i = \begin{cases}
                g  (2  T_{\textit{left}} - 3  T_i + T_{i+1})  & i = 1          \\
                g  (T_{i-1} - 2  T_i + T_{i+1})           & \forall \; i \in [2, N-1] \\
                g  (T_{i-1} - 3  T_i + 2  T_{\textit{right}}) & i = N
              \end{cases}
\end{equation}
where $N>3$ is the number of finite volumes, $c$ is the thermal capacity of a volume, $g$ the thermal conductance between the centres of two adjacent volumes, $T_i$ is the temperature of volume $i$, with $T_1$ being the leftmost and $T_N$ being the rightmost volume, and finally $T_{\textit{left}}$ and $T_{\textit{right}}$ are the prescribed side temperatures.

When expressed in Modelica, the wire model reads as follows.

\begin{listing}[h!]
%\begin{center}
\begin{tabular}{c}
\lstinputlisting[language=Modelica]{./code/motivatingexample.mo}
\end{tabular}
%\end{center}
\caption{\label{lst:wire}The model of a wire, as shown in \cref{eq:wire}, as expressed in Modelica code.}
\end{listing}
The remarkable similarity between the Modelica model and original system of equations is apparent; the \emph{der} operator is used for expressing the derivative with time.

It is evident that the availability of array equations make EB models assume a compact and easily readable form. However, present state-of-the-art translators provide looping constructs \emph{only} as a convenience for the modeller, and do not take advantage of this structural information to improve the translation efficiency, nor that of the generated simulation code. As a consequence, some inefficiencies arise that would be completely unexpected in the world of imperative programming languages.

A first such inefficiency is that the amount of both time and memory needed for the translation scale superlinearly with the size of array equations, rather than exhibiting the expected $\mathrm{O}(1)$ scaling typical of imperative languages. In fact, the compilation of imperative programming languages treats looping constructs explicitly. The instructions inside a loop are represented in the compiler data structures just once, irrespective of the loop iteration count, and all subsequent compilation steps are performed on this compact data structure.

On the contrary, state-of-the-art EB translation algorithms are designed to only work in terms of scalar equations (see \emph{Flattening} in \cref{sec:bg-translation}), which are stored in the translator data structures individually. These scalar equations are then passed on to subsequent translation steps, some of which scale superlinearly. It is not uncommon for models exceeding the $10^5$ equations mark to require translation times in the order of hours, and working memory in the order of hundreds of gigabytes~\cite{bib:BaldinoThesis-2018}.

A second and consequent inefficiency is that the size of the produced imperative code scales linearly with the size of array variables and equations.
Since the original EB model is scalarised, the simulation consists of procedures containing repetitive code instead of looping constructs, often amounting to several gigabytes for large-scale models.
Although imperative compilers are very efficient and can achieve $\mathrm{O}(1)$ scaling in compilation with respect to array sizes, this is only possible if they are given a source code with loops, not long lists of repetitive statements.

The last inefficiency regards the performance of the machine code produced by the compiler when fed with the automatically generated imperative code.
Modern computer architectures are built upon assumptions such as the \emph{locality principle} for both data and code, which are the theoretical foundations for caches~\cite{bib:Drepper-2007a} and other microarchitectural optimisations.
However, the code produced by current-generation model translators is not able to exploit these optimisations.
First, the execution of large blocks of straight-line code requires frequent \emph{instruction} cache invalidation.
Furthermore, the loss of structural information about arrays leads to the generation of code that exhibits irregular data access patterns.
Therefore, repetitive machine code is not only large, but runs significantly slower than equivalent hand-written code, in some cases $100$ times or more~\cite{bib:AgostaEtAl-2019a}.

Both translators and compilers can in principle infer looping constructs and improve access locality, but their ability to perform this kind of optimisation is limited~\cite{bib:LarsenAmarasinghe-2000a}.
Additionally, such inferences require to repetitively scan the list of flattened statements, making $\mathrm{O}(1)$ scaling apparently impossible. 
We thus argue that a far better approach would be to preserve array-awareness throughout the translation process, rather than try to regain it \emph{a posteriori}.

Summing up, all the presented inefficiencies share their root cause in the fact that existing model translation algorithms are not array-aware.
In this paper, we begin an effort to fill this gap.
Since the flattening step is trivial to extend for array-awareness -- it suffices to not unroll looping constructs -- we focus on the second and first truly key step, i.e., the matching problem.

% !TEX root = ../TOMS-2023.tex
\section{Graph representation and problem statement}
\label{sec:graphs-pb-statement}

In this section, we introduce the notation to describe array-aware algorithms, and present the formal statement of the array-aware matching, the algorithm we focus on in this paper.

We denote with $G=(N,D)$ a generic graph having set of nodes $N$ and set of arcs $D$.
Any set of scalar equations $E$ in the scalar variables $V$ can be represented with a bipartite graph in which
\begin{itemize}
    \item $N = V \cup E$,
    \item $V \cap E = \emptyset$,
    \item $D \subseteq V \times E$,
    \item the presence of an arc $(v_i, e_j)$ indicates that variable $v_i$ appears in equation $e_j$.
\end{itemize}

In this context, the set of arcs $D$ represents dependencies between variable and equation nodes.

To include array variables and equations -- including multidimensional ones -- we extend the above notation as follows.
We indicate with $\underline{v}_i$ the generic multidimensional array variable, whose scalar components in the physical problems we target are invariantly real numbers.
We denote by $\zeta_i$ the dimensionality of $\underline{v}_i$.
We denote with $K=\{k_1, \dots, k_\delta, \dots, k_{\zeta_i}\}$ the sequence of positive integers needed to reference a scalar component within the array variable $\underline{v}_i$.
Each generic index $k_\delta$ ranges from $1$ to the size of the corresponding dimension $\delta$, with $1 \leq \delta \leq \zeta_i$.
For compactness, we synthetically write $v_{i,K}$ to indicate the scalar component in $\underline{v}_i$ of indices $\{k_1 \dots k_{\zeta_i}\}$.

Consistently with the above notation, we define the following.
\begin{itemize}
    \item $\underline{V}$ is the set of array variables $\{\underline{v}_i\}$, $i = 1, \dots, |\underline{V}|$ in the model; if the model contains scalar variables, these are considered array variables of unitary dimensionality and size;

    \item $\underline{E}$ is the set of array equations $\{\underline{e}_j\}$, $j = 1, \dots, |\underline{E}|$ in the model; if the model contains scalar equations, they are considered array equations of unitary dimensionality and size.
\end{itemize}

A model containing both scalar and array variables and equations can thus be represented with any of the two equivalent bipartite graphs defined as follows.

\begin{enumerate}
    \item The first one is obtained by just setting $N = \{v_{i,K}\} \cup \{e_{j,L}\}$, and $D \subseteq V \times E$
          as the scalar dependencies. The presence of an arc
          $(v_{i,K}, e_{j,L})$ indicates that the scalar variable $v_{i,K}$ appears in the scalar equation $e_{j,L}$.
          Observe that in the topology of such a graph any information concerning the existence of array variables and
          equations is lost. We name this the \emph{flattened} graph.

    \item The second one is obtained by setting $\underline{N} = \underline{V} \cup \underline{E}$, and $\underline{D}
          \subseteq \underline{V} \times \underline{E}$ as the array dependencies. In this case the presence of
          an arc $(\underline{v}_i, \underline{e}_j)$ indicates that at least one scalar variable in $\underline{v}_i$
          appears in at least one scalar equation in $\underline{e}_j$. We name this the \emph{array} graph.
\end{enumerate}

\begin{figure}[tb]%
\begin{tabular}{c}
\lstinputlisting[language=Modelica]{./code/motivatingexample2.mo}
\end{tabular}
\hspace*{0.5em}
$\vcenter{\hbox{\includegraphics[width=0.4\textwidth]{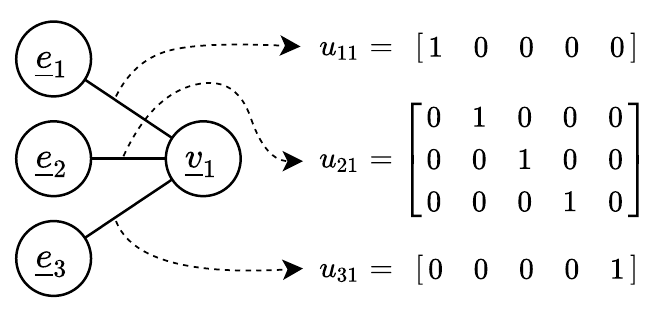}}}$
\caption{\label{lst:forloopmodelica}The array equations of the Modelica model of a wire previously shown in \cref{lst:wire}, and its corresponding array graph. Variable $\underline{v}_1$ is $\dot{T}$, the only non-state array variable in the model. Thus the columns of the $u$ matrices correspond to $\dot{T}_1$, $\dot{T}_2$, $\dot{T}_3$ and so on.}
\end{figure}

The above definitions imply that the array graph is homomorphic to the flattened one.
However, for the two graphs to be equivalent,
each arc in the array one needs to carry information about which components of the connected array variable and equation it refers to --- a matter that does not pertain to the scalar case.
Therefore, the arcs of the array graph must be endowed with the information needed to reconstruct the arcs of the flattened graph.
To formalise this, we introduce the concept of \textit{local multidimensional incidence matrix}.
Given an array equation $\underline{e}_i$ and an array variable $\underline{v}_j$ of dimensionality $\zeta_i$ and $\zeta_j$ respectively, let $K_i=\{k_{i,1},...k_{i,{\zeta_i}}\}$ be the sequence of indices for $\underline{e}_i$, and $L_j=\{l_{j,1},...l_{j,{\zeta_j}}\}$ be the sequence of indices for $\underline{v}_j$.
The multidimensional local incidence matrix $u_{i,j}$ has dimensionality $\zeta_{ij}=\zeta_i+\zeta_j$, and its sequence of indices $Q_{ij}$ is the concatenation of $K_i$ and $L_j$.
Its generic element is $1$ iff the scalar variable $v_{j,K}$ appears in the scalar equation $e_{i,L}$, else it is $0$. We name $U$ the set of local multidimensional incidence matrices.
An example of a Modelica model and the corresponding array graph can be found in~\cref{lst:forloopmodelica}.

The notation we just provided is capable of representing arbitrary multidimensional incidence matrices. However, it should be noted that those produced by EB models coming from equations of physics are significantly structured, and present patterns that arise out of the looping constructs and expressions used for accessing array variables within array equations. It follows that, although from a theoretical standpoint it is convenient to reason in terms of multidimensional incidence matrices, an industry-grade implementation would rely on an efficient pattern-based data structure to achieve $\mathrm{O}(1)$ scaling. This important matter is discussed in \cref{sec:gmis-vaf}.

\subsection{The \vam problem}
\label{sec:arrayawarematching}

The \vam is an operation that takes as input an array graph $\underline{G} = ( \underline{V} \union \underline{E}, \underline{D} )$, where we recall that $\underline{D}$ is the set of dependencies between array variables $\underline{V}$ and array equations $\underline{E}$. Additionally, every dependence $\underline{D}$ has an associated local multidimensional incidence matrix $u_{i,j}$.
The \vam produces as output an array graph $\underline{G}' = ( \underline{V} \union \underline{E}, \underline{D}' )$ with the following properties.
\begin{enumerate}
\item $\underline{D}' \subseteq \underline{D}$
\item Each arc $( \underline{v}_{i}, \underline{e}_{j} ) \in \underline{D}'$ has an associated local multidimensional incidence matrix $m_{i,j}$, where $m_{i,j} ( \underline{v}_{i,K}, \underline{e}_{j,L}) \in \{0,1\}$. We call $M$ the set of $m_{i,j}$.
\item $\forall m_{i,j} \in M$, $m_{i,j}$ has the same size and dimensionality of $u_{i,j}\in U $
\item $\nexists ( \underline{v}_{i}, \underline{e}_{j} ) \in \underline{D}' : m_{i,j}$ is a zero matrix
\item $u_{i,j} ( \underline{v}_{i,K}, \underline{e}_{j,L} ) = 0 \implies m_{i,j} ( \underline{v}_{i,K}, \underline{e}_{j,L} ) = 0$
\item $\forall \underline{v}_{i,K} \in \underline{V} : \exists ! m_{i,j} ( \underline{v}_{i,K}, \underline{e}_{j,L} ) = 1$
\end{enumerate}

Property (1) states that no new arcs are added to the graph, (2--3) denote as ``matching matrices'' $m$ the local incidence matrices of the output graph, (4) states that -- consistently -- arcs fully unmatched are removed, (5) ensures that a scalar equation is matched only with a scalar variable it contains, and (6) that each scalar equation is matched to one and only one scalar variable.

\begin{definition}[optimal array-aware matching]
\label{def:optimality}
Let us define the \emph{expansion function} $f_e(m)$, which maps a local incidence matrix $m$ to the number of looping constructs required to implement the matching it expresses.
$f_e(m)$ has a value of zero if and only if $m$ is a zero matrix, otherwise $f_e(m) \geq 1$.
We say that a matching $(\underline{G}', M)$ is optimal if it minimizes the following:
\begin{equation}
\Omega(\underline{G}, M) = \sum_{m \in M}{f_e(m)}
\end{equation}
\end{definition}

For the purpose of this work, it is not necessary to fully specify the expansion function, because different model translators may be able to ``efficiently'' handle only a subset of all possible matching matrices.
In this discussion, the metric of ``efficiency'' is therefore the ability to represent the matching described by $m_{i,j}$ as a single looping construct consisting of the scalar equations of the node adjacent to the corresponding arc $d_{i,j}$.
Without loss of generality we can consider the ideal case in which every matching matrix maintains its correspondence with exactly one looping construct.
In this case, the $f_e$ function takes this form:
\begin{equation}
f_e(m) = \left\{
\begin{array}{ll}
0 & \mbox{iff $m$ is a zero matrix}\\
1 & \mbox{otherwise}\\
\end{array}
\right.
\end{equation}

In this case, the optimality metric $\Omega$ is equal to the number of arcs where at least one equation is matched.
Intuitively, the more arcs have a matching, the higher the number of times the same array equation has been matched with multiple variables, and for each variable matched with the same array equation a separate looping construct must be introduced.
Therefore, minimizing $\Omega$ means minimizing the number of looping constructs.

\subsubsection{Just enough, but not too much}
It is important to stress that the optimality metric $\Omega$ is but a proxy for identifying the matching that best improves both translation and simulation time. With the definition we gave of array equation in this paper (Definition~\ref{def:arrayequation} in Section~\ref{sec:bg-translation}) where individual scalar equations in an array equation differ only by the array indices, reducing the number of looping constructs does indeed result in more efficient code. Other works however, such as \cite{otter2017transformation} and the corresponding Modia implementation \cite{bib:ModiaBase} allow if statements in array equations thereby allowing to merge in a single array equation also scalar equations that are structurally different. For example, the wire model written with such an extended definition of array equations would read as
\lstinputlisting[language=Modelica]{./code/motivatingexample3.mo}
Although it would appear that such an array equation could improve the optimality metric by requiring a single looping construct that handles all the wire finite volumes instead of requiring one looping construct plus two scalar equations for the first and last volume, this would \emph{not} result in efficient code. Indeed, if such an array equation were brought as-is till code generation, the imperative code would need if statements in the loop to handle the differences in the equation structure, and the cost at run-time of the if statements would need to be paid multiplied by every loop iteration and additionally multiplied by every simulated time step. This is a strong point in favour of our definition of array equations, that explicitly disallows grouping structurally different scalar equations in a single array equation.
Moreover, to achieve the best efficiency, a translator would need to perform symbolic manipulations to transform inefficient code such as the one above by moving the structurally dissimilar equations outside of the loop before the matching step, should the modeller decide to write the model in that form.

% !TEX root = ../TOMS-2023.tex
\section{Complexity of optimal \VAM}
\label{sec:complexity-proof}

In this section we show that for what concerns the matching problem, preserving the array structure of both equations and variables results in NP-completeness. This theoretical result motivates the need for heuristic algorithms, that will
be presented in \cref{sec:algos}.

\begin{theorem}[Complexity of optimal \vam]
The problem of producing an optimal \vam is NP-complete.
\label{thm:vamnp}
\end{theorem}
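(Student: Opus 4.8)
The plan is to prove the two halves of NP-completeness separately: membership in NP and NP-hardness. Throughout I take the size of an instance to be the total size of the scalar (flattened) incidence information, consistently with the paper's choice to reason in terms of the multidimensional incidence matrices $U$; under this convention the decision version reads: given an array graph $\underline{G}=(\underline{V}\union\underline{E},\underline{D})$ with its matrices $U$ and an integer bound $B$, does there exist a valid matching $(\underline{G}',M)$ satisfying properties (1)--(6) of \cref{sec:arrayawarematching} with $\Omega(\underline{G},M)\le B$?

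Membership in NP is the easy direction. A certificate is simply the set $M$ of matching matrices, whose total size is polynomial in the instance size. Verifying properties (1)--(5) amounts to checking, entry by entry, that each $m_{i,j}$ has the right shape, is nonzero, and is dominated by $u_{i,j}$; verifying (6) is the bipartite-matching feasibility check that every scalar variable is covered exactly once; and computing $\Omega=\sum_{m\in M}f_e(m)$ in the ideal case of \cref{def:optimality} is just counting the nonzero matrices. All of this is polynomial, so the problem lies in NP.

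For hardness I would reduce from \MTwoSat, which is NP-complete, mapping a 2-CNF formula $\varphi$ with clauses $c_1,\dots,c_C$ to an instance of optimal \vam. The encoding I have in mind uses two families of gadgets. A \emph{variable gadget} for each Boolean variable $x_p$ offers exactly two ``clean'' ways to satisfy the matching constraint (6) on its scalar components, each using a single array-arc and hence contributing a fixed baseline cost of $1$ to $\Omega$; the choice of which arc carries the match encodes the truth value of $x_p$. A \emph{clause gadget} for each $c_q=(\ell_a\myor\ell_b)$ contains a scalar equation that, when the literal making $c_q$ true is selected by the corresponding variable gadget, can be absorbed into an already-activated array-arc at no additional cost, but that otherwise forces a fresh array-arc and thus a $+1$ penalty. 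With this design $\Omega$ equals the baseline plus the number of unsatisfied clauses, so setting $B$ equal to the baseline plus $C-k$ makes $\Omega\le B$ equivalent to ``at least $k$ clauses of $\varphi$ can be satisfied.'' Since the gadgets have size polynomial in $\varphi$, the reduction is polynomial, and completing both halves establishes NP-completeness.

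The hard part, and the place where the argument must be made with care, is proving the correctness of the clause gadget and the two directions of the equivalence. I must engineer the scalar-to-array membership so that (i) the only cost-relevant degrees of freedom are the binary truth choices, and (ii) no optimal matching ever profits from \emph{splitting} the scalar components of a variable gadget across several array-arcs, nor from matching a clause equation in an unintended way that would beat the clean encoding; intuitively this holds because any split or unintended match activates a strictly larger set of array-arcs, but it must be verified against all of properties (1)--(6) simultaneously. Establishing that every feasible matching can be normalised to one of the intended clean forms without increasing $\Omega$, and conversely that each truth assignment yields a feasible matching of the predicted cost, is where the bulk of the technical work resides.
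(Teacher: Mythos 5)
Your overall strategy --- membership in NP via a polynomial-size certificate, NP-hardness by reduction from \MTwoSat --- coincides with the paper's, and the membership half is fine (indeed more explicit than the paper, which leaves it implicit). But the hardness half has a genuine gap: you never actually construct the gadgets, and you yourself locate ``the bulk of the technical work'' in precisely the part you omit. The difficulty is not mere bookkeeping. In the array-aware setting, the only mechanism by which $\Omega$ can fall below the number of matched scalar arcs is that two scalar matched arcs between the same pair of array nodes collapse into a single array arc; this naturally rewards a \emph{conjunction} (``both of these arcs are selected''), not the \emph{disjunction} ``at least one literal of $c_q$ is true'' that your clause gadget needs to detect. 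Moreover, property (6) forces every scalar component to be matched somewhere regardless of the truth assignment, so an unsatisfied clause cannot simply ``cost one more'' unless a concrete incidence structure makes the extra array arc unavoidable; as described, your clause gadget has no evident realisation consistent with properties (1)--(6).

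The paper closes exactly this gap with two devices you would need, or an equivalent. First, each OR clause $(a \myor b)$ is rewritten as the three mutually exclusive AND clauses $(a \myand \mynot{b})$, $(\mynot{a} \myand b)$, $(a \myand b)$, so that maximising satisfied OR clauses becomes maximising satisfied two-literal conjunctions --- and a conjunction is precisely what arc-merging can reward. Second, all occurrences of each literal are chained into a single even cycle in which every node has degree exactly $2$, so each cycle admits exactly two perfect matchings (the odd-numbered arcs or the even-numbered ones); this is what actually delivers your condition (i) that the only cost-relevant degrees of freedom are the binary truth choices, a property you flag as needing verification but do not secure. The parity of each clause's pair of arcs within their cycles is arranged so that both are selected iff the corresponding AND clause is satisfied, in which case the two scalar arcs merge into one array arc of a size-$2$ array equation/variable pair and save exactly one unit of $\Omega$. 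Without these (or comparable) constructions, the reduction remains a plan rather than a proof.
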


\begin{proof}
We prove \cref{thm:vamnp} by reducing the \mtwosat problem to optimal \vam. \MTwoSat was proven NP-complete by Garey, Johnson and Stockmeyer~\cite{bib:GareyEtAl-1976a}, and reads as follows: given a Boolean formula in conjunctive normal form, where each clause contains at most two literals, find a literal assignment such that the maximum number of clauses is satisfied.

First of all, we establish a procedure that allows to represent a \mtwosat problem in terms of \vam. To this end we introduce three formal modifications to the way \mtwosat is expressed in order to simplify the reduction process.

\begin{enumerate}

\item Every clause in the form $(a)$ is rewritten as $(a \vee a)$. It is evident that this rewriting
      does not change the value of that clause for any literal assignment.

\item Instead of considering the input to \mtwosat as a conjunction of clauses, we consider it as an ordered list.
      This is legitimate, as maximising the number of satisfied clauses does not require knowing if the entire formula
      is satisfied.

\item We assume that, when traversing the list of clauses, the first encountered occurrence of each literal is not
      negated. Should this be false, one would simply have to replace that literal with another one defined as its
      complement.

\end{enumerate}

The above said, to encode a \mtwosat instance into a \vam one, we start by recalling that
\begin{equation*}
(a \myor b) = (a \myand \mynot{b}) \myor (\mynot{a} \myand b) \myor (a \myand b)
\end{equation*}
and we replace each OR clause in the list with the above defined equivalent triple of AND ones. Observe that only one of the three AND clauses can be true, hence maximising the number of satisfied AND clauses in the reformulated list is equivalent to the original problem.

\begin{figure}
\begin{center}
\includegraphics{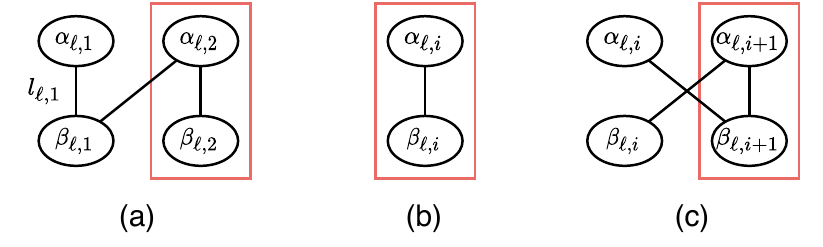}
\caption{\label{fig:butterflies}Subgraphs used by the procedure for building the intermediate graph $G$ from a list of AND clauses.}
\end{center}
\end{figure}

Let us start with two empty sets of nodes $A = \emptyset$, $B = \emptyset$.
Then, we build an intermediate flattened bipartite graph $G = (A \union B, L)$ by traversing the list of AND clauses.
By construction, we will retain the invariant $A \cap B = \emptyset$.
For each clause $c$ we operate as follows.

First, we consider the first literal $\ell$ in the clause.
If this is the first occurrence of $\ell$, we add to $G$ the subgraph shown in \cref{fig:butterflies} (a).
In doing so, $\alpha$ nodes should be considered belonging to the $A$ set while $\beta$ nodes should be considered part of the $B$ set.
We name $e_{\ell,1}$ the first edge associated to literal $\ell$.
We name $\alpha_{\ell,1}$ and $\beta_{\ell,2}$, respectively the \emph{start} and the \emph{end node} for that literal.

If $\ell$ was already encountered and appears here in non-negated form, we add to $G$ the subgraph shown in \cref{fig:butterflies} (b), and we connect $\alpha_{\ell,i}$ to the end node of the literal.
The start node of the literal does not change, while its end node becomes $\beta_{\ell,i}$.

If $\ell$ was already encountered and appears here in negated form, we add to $G$ the subgraph shown in \cref{fig:butterflies} (c), and we connect $\alpha_{\ell,i}$ to the end node of the literal.
The start node of the literal does not change, while its end node becomes $\beta_{\ell,i}$.

We repeat the subgraph insertion process for the second literal in the clause.

For each clause $c$, we define a set of four nodes -- taken from the two subgraphs just added -- as the union of nodes highlighted in red in \cref{fig:butterflies}. We name this set \emph{clause nodes} of $c$ and we name it $N_c$.

We repeat the above for all clauses.
When the end of the list is reached, we connect the end node of each literal with its start one. Doing so, we create for each literal a simple cycle $\alpha_{\ell,1}$, $\beta_{\ell,1}$, $\ldots$, $\alpha_{\ell,n_\ell}$, $\alpha_{\ell,1}$ of even cardinality $n_\ell$, ordered as just indicated. We number the edges of each cycle as $l_{\ell,1}$ through $l_{\ell,n_\ell}$. This concludes the construction of graph $G$.

Graph $G$ is a flattened graph and as such does not have the form required by \vam, thus we need to
construct a different graph $\underline{G}=(\underline{E} \union \underline{V}, \underline{D})$ homomorphic to $G$, and expressed in terms of array equations and variables. Therefore, $\underline{G}$ will
satisfy the following properties:

\begin{itemize}
\item For each node $\alpha_{\ell,i} \in A$ not part of a clause node $N_c$, there exists a variable $\underline{v}_{n} \in \underline{V}$ with dimensionality 1 and size 1.
\item For each node $\beta_{\ell,j} \in B$ not part of a clause node $N_c$, there exists an equation $\underline{e}_{m} \in \underline{E}$ with dimensionality 1 and size 1.
\item For each clause node $N_c$ there exists a variable $\underline{v}_{n} \in \underline{V}$ and an equation $\underline{e}_{m} \in \underline{E}$, both with dimensionality 1 and size 2.
\item The local incidence matrix $u_{n,m}$ is a square identity matrix in order to represent the original relationship found in the scalar graph.
\item No other variables nor equations exist in $\underline{G}$.
\item Dependencies arcs $\underline{D}$ are constructed so as to make $\underline{G}$ homomorphic to $G$.
\end{itemize}

In other words, all the $\alpha$ nodes are considered variables, all the $\beta$ nodes are considered equations, and each labelled set of four nodes $N_c$ forms an array equation of size $2$, and a corresponding array variable of size $2$, while unlabelled nodes translate to scalars.

To carry on, we now need the following lemma.

\begin{lemma}[Complexity of the construction of $G$]\label{lemma:match-graph}
The graphs $G$ and $\underline{G}$ can be constructed in polynomial time.
\end{lemma}

\begin{proof}
The construction process which defines $G$ is linear with the number of clauses and the creation of $\underline{G}$ can be done by simply enumerating the nodes and edges of $G$.
\end{proof}

Back to the main proof, once we obtained the matched graph by executing \vam on $\underline{G}$, we assign to each literal $\ell$ the boolean value \emph{true} if arc $l_{\ell,1}$ has been selected for the matching, \emph{false} otherwise. This last step can also be performed in polynomial time.

Given the definition of optimal \vam of~\cref{def:optimality}, an algorithm capable of solving the problem will maximise the number of $(l_{\ell_1,i},l_{\ell_2,j})$ arc pairs between nodes belonging to the clause node sets $N_c$. In fact, each pair contributes only an unitary weight to the optimality metric $\Omega$, while two non-paired arcs will contribute a weight of two.

By construction of the bipartite graph $G$, each node, be it an equation or a variable, has exactly two adjacent edges. Thus, the matching choice is binary. Since we have built a simple cycle for each literal, there are only two matching solutions for each cycle, one selecting $l_{\ell,1}$ and all the odd numbered arcs, the other one selecting all the even numbered ones.

Additionally, due to how we constructed each cycle, the arc connecting the nodes belonging to a clause node set $N_c$ for cases (a) and (b) of \cref{fig:butterflies} is odd, while in case (c) it is even. Thus if the arc in the red box of \cref{fig:butterflies} (a) is selected, then all the arcs in boxes of subgraph type (b) will also be selected, the arcs in boxes of subgraph type (c) will not be selected, and \emph{vice versa}.

It follows that an arc pair inside a labelled node set can be selected if and only if the literal assignment -- as read from the graph -- satisfies the corresponding clause. As such, the objective functions of \mtwosat and \vam, given the proposed graph construction and interpretation, are equivalent. This implies that \vam is NP-complete.
\end{proof}

\paragraph{Example}
For the convenience of the reader, we complement the formal proof with an example of how the intermediate bipartite graph can be built from the following list of AND clauses:
\begin{equation}
\label{eq:red-example}
\{a \myand b,\; \mynot{a} \myand c,\; c \myand d\}
\end{equation}

\begin{figure}
\centering
\includegraphics[width=\textwidth]{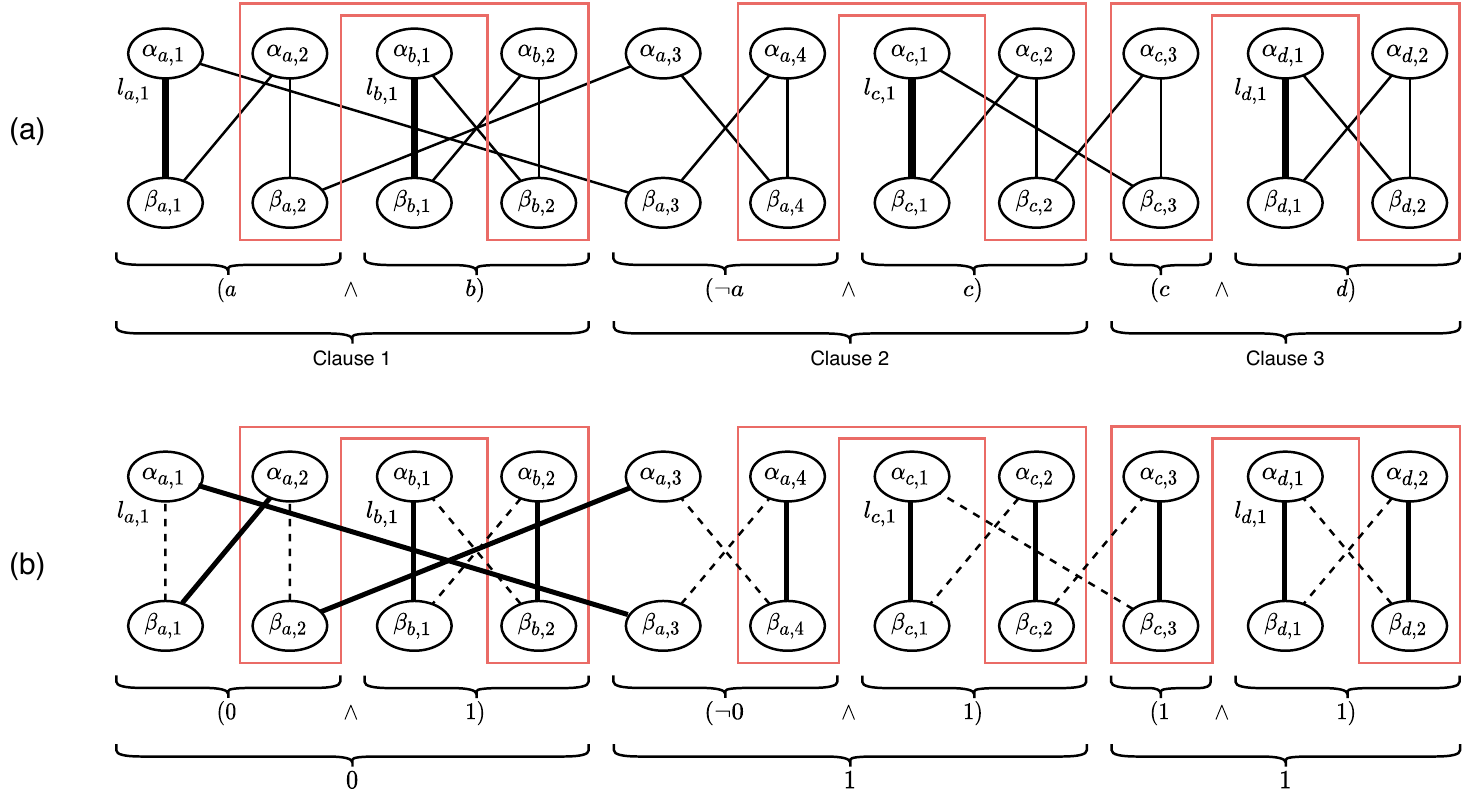}
\caption{\label{fig:butterflies-example}Bipartite graphs representing
the list of AND literals shown in \cref{eq:red-example}.
Graph (a) shows the correspondence between each clause and literal
and the nodes and arcs in the graph.
Graph (b) shows an optimal matching for this graph. Solid lines
represent matched arcs, dashed lines represent non-matched arcs.}
\end{figure}

The list of clauses is scanned left to right. The first clause encountered
is $(a \myand b)$, and both $a$ and $b$ are literals being encountered
for the first time. As a result, in the graph we add two structures of
type (a) shown in \cref{fig:butterflies}.
The start node of $a$ is $\alpha_{a,1}$, the end node of $a$ is $\beta_{a,2}$.
In the same way,
the start node of $b$ is $\alpha_{b,1}$, the end node of $b$ is $\beta_{b,2}$.

Now, the second clause $(\mynot{a} \myand c)$ is processed.
The literal $a$ was already encountered, and appears in negated form;
as a result, a structure of type (c) from \cref{fig:butterflies} is
inserted, and the $\beta_{a,2}$ end node is connected with $\alpha_{a,3}$.
The end node of $a$ is changed to the newly inserted $\beta_{a,3}$.
Instead, the literal $c$ is newly encountered, and therefore we insert
a structure of type (a) from \cref{fig:butterflies}.
The start node of $c$ is $\alpha_{c,1}$, and the end node of $c$ is $\beta_{c,2}$.

The process has now arrived at the third and last clause $(c \myand d)$.
The literal $c$ makes a reappearance in positive form, and therefore a
structure of type (b) is inserted.
The end node $\beta_{c,2}$ is connected to the new node $\alpha_{c,3}$, and
$\beta_{c,3}$ becomes the new end node of $c$.
Finally, the literal $d$ is encountered for the first time, and a structure of type (a)
is inserted. At this point the list of clauses is exhausted.

The last step for constructing the scalar graph consists in connecting the end
node of each literal with its corresponding start node.
For literal $a$, we connect $\alpha_{a,1}$ with $\beta_{a,3}$.
For literal $b$, we connect $\alpha_{b,1}$ with $\beta_{b,2}$.
For literal $c$, we connect $\alpha_{c,1}$ with $\beta_{c,3}$.
Finally, for literal $d$, we connect $\alpha_{d,1}$ with $\beta_{d,2}$.

\Cref{fig:butterflies-example} (a) shows the bipartite graph generated through
the steps we have just outlined. Additionally, the figure outlines in red each
clause node, and highlights the first arc of each cycle, which is used to
determine the value of each literal from the matching.

\Cref{fig:butterflies-example} (b) highlights the optimal matching
of the same graph, alongside with the value of each clause and each literal.
Since arc $l_{a,1}$ has not been used in the matching, $a$ is assigned value 0.
Instead, literals $b$, $c$ and $d$ are assigned a value of 1, 
since the arcs $l_{b,1}$, $l_{c,1}$, and $l_{d,1}$ are all used in the
matching.
This matching maximises the number of array equations and variables
matched, and therefore also minimizes $\Omega$: two arrays are matched, out of the three arrays in the input graph.
Each array corresponds to a clause, and the matched arrays represent
clauses whose value is one. In fact, the literal assignment makes
the second and third clause true, and the first clause false.

% !TEX root = ../TOMS-2023.tex

\section{Array-Aware Matching Algorithm}
\label{sec:algos}

The NP-completeness proof of optimal matching highlights how, in order to efficiently handle large-scale problems, there is a need to introduce suitable heuristics.
In this section, we propose an algorithm of reduced complexity.

Our proposal is a partially heuristic two-step procedure. The first step identifies obligatory matching choices and removes them from the problem, thereby reducing its size. We call this first step, presented in~\cref{sec:simplify-algo}, the \emph{simplification} step. No heuristics are involved in it.

For the second step, that we call the \emph{matching} step, we propose an extension of the Hopcroft-Karp~\cite{bib:HopcroftKarp-1973a} algorithm to array graphs, which allows to exploit local incidence matrices in such a way to preserve the existing looping constructs. The matching step, described in~\cref{sec:matching-algo}, terminates at the first solution found, whence its heuristic nature.

Before presenting the simplification and matching algorithms, it is necessary to define some operations on local incidence matrices that are used throughout the said algorithms. This is done in~\cref{sec:matrixops}, but requires an important preliminary remark.
When dealing with matching problems that contain multidimensional arrays, incidence matrices become multidimensional as well. To lighten the treatise, in this work we nonetheless stick to talking about ``rows'' and ``columns''. This notation does not cause any generality loss, however. All the proposed algorithms can work in the case of multidimensional incidence matrices by simply interpreting ``rows'' and ``columns'' as \emph{sets} of dimensions, referring respectively to equations and variables. We also talk about ``row'' and ``column vectors'', that generalize to matrices having only one of the two sets of dimensions.

\subsection{Operations on local incidence matrices}
\label{sec:matrixops}

Since incidence matrices are boolean, it is trivial to define the logical operators \emph{conjunction} (\emph{and}, $\myand$), \emph{disjunction} (\emph{or}, $\myor$) as well as \emph{negation} (\emph{not}, $\neg$) on matrices of the same dimensions as the element-wise operations. Additionally, for convenience, we define the operation $a \setminus b$ (\emph{subtraction}) as $a \myand \neg b$.

We also define the $\myand$ and $\setminus$ operator where the first argument is a matrix and the second is a row vector. The behaviour in this case is equivalent to replacing the vector with a matrix where every element in each row is equal to the corresponding element in the vector. Equivalently, these operations are also defined with column vectors.

We further define the \emph{flattenRows} and \emph{flattenColumns} operations, each taking a matrix and producing, respectively, a column and row vector. Each element of these vectors is $1$ iff there is at least a $1$ in the corresponding row or column, respectively.

The last operation that we need to define is \emph{solveLocalMatchingProblem}, which given a local incidence matrix $u$ returns a list of possible match matrices, called \emph{match options}. More in detail, each of the returned options is a valid match, that satisfies the following properties:
\begin{enumerate}
\item $m_{i,j}=1 \implies u_{i,j}=1$,
\item Each option has at most one element at $1$ for each row and column.
\end{enumerate}
For example, applying \emph{solveLocalMatchingProblem} to matrix
\begin{equation}
u = \begin{bmatrix}
    1 & 0 & 0 \\
    1 & 1 & 0 \\
    1 & 0 & 1 \\
    \end{bmatrix}
\end{equation}
returns the set of possible match matrices
\begin{equation}
\left\{
    \begin{bmatrix}
    1 & 0 & 0 \\
    0 & 1 & 0 \\
    0 & 0 & 1 \\
    \end{bmatrix},\;
    \begin{bmatrix}
    0 & 0 & 0 \\
    0 & 0 & 0 \\
    1 & 0 & 0 \\
    \end{bmatrix},\;
    \begin{bmatrix}
    0 & 0 & 0 \\
    1 & 0 & 0 \\
    0 & 0 & 0 \\
    \end{bmatrix},\;
    \begin{bmatrix}
    1 & 0 & 0 \\
    0 & 0 & 0 \\
    0 & 0 & 0 \\
    \end{bmatrix}
    \right\}
\end{equation}
where, depending on the \emph{solveLocalMatchingProblem} implementation but without any effect on the proposed algorithms, the last match may or may not be returned as it can be considered part of the first (larger) one.
As an additional constraint, \emph{solveLocalMatchingProblem} shall try to return the largest possible match, that is, the one that matches the largest number of equation/variable pairs.
This is not meant as a strong requirement -- in other words, it is acceptable to provide an implementation of \emph{solveLocalMatchingProblem} that does not return the largest possible match options in all cases.
However, the largest the single options provided by this primitive, the higher quality the final matching will be.

\subsection{Simplification algorithm}
\label{sec:simplify-algo}

\begin{algorithm}[t]
    \Fn{simplify} {
        \textbf{Input}: $\underline{G}=(\underline{N} = \underline{V} \cup \underline{E}, \underline{D})$\;
        \textbf{Output}: $\underline{G}$\;
        $L \leftarrow \emptyset$\;
        \ForEach{$\underline{n} \in \underline{N}$} {
            \textbf{if} $\deg(\underline{n}) = 1$ \textbf{then} $L \leftarrow L \cup \{\underline{n}\}$\;
        }
        \While{$L \neq \emptyset$} {
            $\underline{n}_1 \leftarrow \mathrm{getElementIn}(L)$\;
            $L \leftarrow L \setminus \{\underline{n}_1\}$\;
            $\underline{n}_2 \leftarrow $ unmatched node reached from $\underline{n}_1$\;
            $matchOptions \leftarrow \mathrm{solveLocalMatchingProblem}(u_{12})$\;
            \If{$|matchOptions| = 1$} {
                $alreadyMatched1 \leftarrow$ vector of size $|\underline{n}_1|$ where $alreadyMatched1_j = 1$ iff $\underline{n}_{1,j}$ is matched\;
                $alreadyMatched2 \leftarrow$ vector of size $|\underline{n}_2|$ where $alreadyMatched2_j = 1$ iff $\underline{n}_{2,j}$ is matched\;
                $m_{12} \leftarrow \mathrm{getElementIn}(matchOptions) \setminus alreadyMatched1 \setminus alreadyMatched2$\;
                \If{all components of $\underline{n}_2$ are matched} {
                    $L \leftarrow L \setminus \{\underline{n}_2 \}$\;
                    \ForEach{$\underline{n} \; in \; unmatchedReachedNodes(\underline{n}_2)$} {
                        \textbf{if} $\mathrm{unmatchedDegree}(\underline{n}) = 1$ \textbf{then} $L \leftarrow L \cup \{\underline{n}\}$\;
                    }
                }
                \Else {
                    \textbf{if} $\mathrm{unmatchedDegree}(\underline{n}_2) = 1$ \textbf{then} $L \leftarrow L \cup \{\underline{n}_2\}$\;
                }
            }
        }
    }
    \caption[Array matching simplify algorithm]{Array matching simplify algorithm.}
    \label{alg:matching-simplify}
\end{algorithm}

The simplification step is dedicated to performing all the obligatory matches.
Its importance in the handling of real-world problems becomes evident when considering that, as explained in~\cite[Chapter 7, and more specifically Section 7.2]{bib:CellierKofman-2006a},
when matching differential equations, the \emph{state variables} of the system are considered to be known and need not be matched.
Thus, the variables to be matched are either non-state variables, or derivatives of state variables.
In modelling the evolution of physical systems over time, it is very common to write equations in the form
\begin{displaymath}
\dot{x_i} = \mathrm{f}(x,u)
\end{displaymath}
where $x$ and $u$ are respectively the sets of state variables and inputs.
These equations introduce equation nodes in the bipartite matching graph with a single arc connecting them to the corresponding derivative, and that thus can only be matched with one variable. For example, this applies to all the three arcs in~\cref{lst:forloopmodelica}.
The commonplace presence of obligatory matching options motivates the introduction of an efficient simplification step.

The proposed simplification step is shown in~\cref{alg:matching-simplify}. It takes as input the array graph $\underline{G}$, and operates as follows. First, it constructs a set $L$ from every node in the graph with only one arc, regardless of it being an equation or a variable.

Then, for every node $\underline{n}_1$ in the set, it uses the \emph{solveLocalMatchingProblem} procedure to attempt to match the only arc adjacent to that (array) node.

In the case of multiple matching options, the simplification step skips the considered node, because at this stage any matching choice would be arbitrary, and may affect the feasibility of the \vam problem.
Conversely, if only one option is found that that fully matches $\underline{n}_1$, that option is included in the solution.
Attention then shifts to $\underline{n}_2$, the only node reached by $\underline{n}_1$.

Since we have fully matched one node and matched some variables from $\underline{n}_2$, the simplify procedure now checks whether also $\underline{n}_2$ is fully matched. If it is, $\underline{n}_2$ is removed from the set in case it was there, and since having fully matched $\underline{n}_2$ may result in neighboring nodes with only one unmatched arc, all such nodes are added to the set.
Even if $\underline{n}_2$ is not fully matched, it too may end up having only one unmatched arc (if it previously had two), and in this case $\underline{n}_2$ is added to the set.
The simplify algorithm thus recursively eliminates all nodes with a constrained match, leaving only irreducible connected components, as well as nodes where arcs have multiple matching options. After the simplify step all matches found are removed from the graph, and only the remaining part of the graph is passed to the subsequent matching algorithm.

\subsubsection{Optimal matching in polynomial time}

It is evident from the formulation of the simplification step that
its algorithm executes in polynomial time with respect to the number of nodes
and arcs in the array graph and, with a suitable data structures to represent
incidence matrices, in O(1) time with respect to the size of the arrays.
We also notice that there is a class of array graphs that can be completely matched
by application of the simplification algorithm alone. The model of a wire previously
shown in \cref{lst:wire} is an example of that. By construction of
the simplification algorithm, such graphs must have a single solution to the
array-aware-matching problem, and that solution is therefore optimal.
It follows that such models can be optimally matched in polynomial time with
respect to the number of nodes and arcs in the array graph, and in O(1) time with
respect to the size of the arrays.

Finding other classes of graphs which can be optimally matched in polynomial time
is an open research issue.

\subsection{Matching algorithm}
\label{sec:matching-algo}

Now we illustrate the complete array-aware matching algorithm.
This algorithm must be able to match any valid array graph, and
additionally it must attempt to approximate the optimal matching
as much as possible.
The algorithm we present is based on the well-known Hopcroft-Karp
one for bipartite graph matching, with adaptations to support
array graphs. Its main procedure is shown in \cref{alg:matching-main}.

\begin{algorithm}[h!]
\Fn{matching} {
  \textbf{Input}: $\underline{G}=(\underline{N} = \underline{V} \cup \underline{E}, \underline{D})$\;
  \textbf{Output}: $\underline{G}$\;
  
  P = augmentingPaths($\underline{G}$)\;
  \While{$P \ne \emptyset$}{
    \ForEach{$p \in P$}{
      applyPath($G$, $p$)
    }
    $P$ = augmentingPaths($G$)\;
  }
}
\caption{Main procedure of the matching algorithm.}
\label{alg:matching-main}
\end{algorithm}

As we will see in the following, procedure \emph{augmentingPaths} computes a list of
non-intersecting \emph{augmenting paths} in the graph to be applied later
by the \emph{applyPath} procedure. When no augmenting paths are found,
the matching is considered complete.

\begin{algorithm}[h!]
\Fn{applyPath} {
  \textbf{Input}: $\vvv{G}, p$\;
  \textbf{Output}: $\vvv{G}$\;
  
  \ForEach{$(\vvv{n}, \vvv{d}_{i,j}, {m}'_{i,j}) \in p$}{
    \uIf{$\vvv{n}$ is the equation $\vvv{e}_i$}{
      ${m}_{i,j} \leftarrow {m}_{i,j} \myor {m}'_{i,j}$
    }\ElseIf{$\vvv{n}$ is the variable $\vvv{v}_j$}{
      ${m}_{i,j} \leftarrow {m}_{i,j} \setminus {m}'_{i,j}$
    }
  }
}
\caption{Procedure for adding an augmenting path to the matching.}
\label{alg:matching-apply}
\end{algorithm}

Each augmenting path $p$, when applied, adds or removes matching from
the graph.
It consists of a list of tuples called \emph{steps} $p_i = (\vvv{n}, \vvv{d}_{i,j}, m'_{i,j})$
where $\vvv{n} \in \vvv{N}$ is the starting node of the step, $\vvv{d}_{i,j}$ is the arc being
traversed, and $m'_{i,j}$ is the incidence matrix that specifies which elements
of the local matching incidence matrix ${m}_{i,j}$ are modified by the step.
In a similar way to the Hopcroft-Karp algorithm,
steps starting from an equation node ($\vvv{n} \in \vvv{E}$)
add non-zero matrix entries to the matching, and steps starting from a variable node
($\vvv{n} \in \vvv{V}$) remove entries from the matching.
In other words, in a step starting from an equation node,
the elements set to 1 in $m'_{i,j}$ are added to the matching matrix ${m}_{i,j}$.
On the contrary, in a step starting from a variable node, the elements set to 1 in $m'_{i,j}$
are \emph{removed} from ${m}_{i,j}$.

The computation of the augmenting paths is performed through a
breadth-first-search in the residual graph. 
In contrast to the conventional scalar matching process performed by the
Hopcroft-Karp algorithm, when computing the augmenting
path we also need to keep track
of the matching matrices that express the set of equivalent scalar variables
and equations that are being matched or un-matched.

\begin{algorithm}[htb]
\Fn{augmentingPaths} {
  \textbf{Input}: $\vvv{G}$\;
  \textbf{Output}: $P$\;
  $\;$\\
  (Calculation of the initial frontier)\\
  $F \leftarrow \{\}$\;
  \ForEach{$\vvv{e}_i$}{
    $f \leftarrow$ vector of size $|\vvv{e}_i|$ where $f_j = 1$ iff $\vvv{e}_{i,j}$ is not currently matched\;
    \If{$f \ne \emptyset$}{
      $F \leftarrow$ append($F$, $\{(\vvv{e}_i, f)\}$)\; % $F \leftarrow F + \{f\}$\;
    }
  }
  $\;$\\
  (Calculation of the augmenting paths with breadth-first-search)\\
  $G_{BFS} = (A, B), L \leftarrow \mathrm{bfs}(G, F)$\;
  $\;$\\
  (Heuristic sort of the augmenting paths)\\
  $L \leftarrow \mathrm{heuristicSort}(L)\;$\\
  $\;$\\
  (Restriction of the flow of each augmenting path and removal of overlapping paths)\\
  $P \leftarrow \emptyset$\;
  \ForEach{$l_0 = (\vvv{n}, s_0) \in L$}{
    $p \leftarrow \{\}$,\,
    $l \leftarrow l_0$,\,
    $s \leftarrow s_0$\;
    \While{$\exists (l', m, l) \in B$}{
      $l \leftarrow l'$,\,
      $m' \leftarrow s \myand m$\;
      \uIf{$\vvv{n}', \vvv{n}$ are equations and variable $\vvv{e}_i, \vvv{v}_j$}{
        $s \leftarrow \textrm{flattenRows}(m')$\;
          $p \leftarrow \mathrm{append}(p, \{(\vvv{e}_i, \vvv{d}_{i,j}, m')\})$
      }
      \ElseIf{$\vvv{n}', \vvv{n}$ are variable and equation $\vvv{v}_j, \vvv{e}_i$}{
        $s \leftarrow \textrm{flattenColumns}(m')$\;
          $p \leftarrow \mathrm{append}(p, \{(\vvv{v}_j, \vvv{d}_{i,j}, m')\})$
      }
    }
    $\;$\\
    (Scalar arc intersection test between augmenting paths)\\
    \If{%
      $\nexists%
        \vvv{e}_i \in \vvv{E}, p' \in P:%
        (\vvv{e}_i, \vvv{d}_{i,j}, m) \in p, (\vvv{e}_i, \vvv{d}_{i,j}, m') \in p',%
        \textrm{flattenRows}(m) \myand \textrm{flattenRows}(m') \neq \emptyset$%
        \\
       \scalebox{0.98}[1]{$\myand \nexists %
         \vvv{v}_j \in \vvv{V}, p' \in P: %
         (\vvv{v}_j, \vvv{d}_{i,j}, m) \in p, (\vvv{v}_j, \vvv{d}_{i,j}, m') \in p', %
         \textrm{flattenColumns}(m) \myand \textrm{flattenColumns}(m') \neq \emptyset$}%
    }{
        $P \leftarrow \mathrm{append}(P, \{p\})$\;
    }
  }
}
\caption{Procedure for computing an augmenting path from a partial state of the matching.}
\label{alg:matching-makepaths}
\end{algorithm}

The breadth-first-search procedure must first be seeded with an initial frontier $F$
by collecting the list of array equation nodes with at least one free scalar equation.
This initial frontier is then passed to the \emph{bfs} procedure, which computes
both the initial list of augmenting path candidates $L$ and the forest of search trees
$G_{BFS}$ traversed during the search. In $G_{BFS}$, $A$ is the list of nodes in the
forest, and $B$ is the list of arcs.
Each node $a\in A$ is a tuple $(n, v)$ where $n$ is a node in the matching graph, and
$v$ is a binary vector with size and dimensionality equivalent to the one of $n$
specifying which scalar equations are being traversed in the path.
Similarly, each arc $b \in B$ is a triple $(a, m, a')$ where $a$ is the parent node,
$a'$ the child node, and $m$ is the local incidence matrix that describes the equivalent traversed
path in the homomorphic scalar graph.
The augmenting path candidates $l \in L$ are actually just leaves of the $G_{BFS}$ forest.

In contrast to the basic Hopcroft-Karp matching algorithm, but similarly to the
Ford-Fulkerson flow maximization algorithm~\cite{bib:FordFulkerson-1956a},
the initial ``flow'' in the first
steps of the path can be different from the final ``flow'' at the last step.
In this context the ``flow'' of a given step is simply the number of scalar equations affected.
To perform this operation, each path is traversed backwards -- from the end to the beginning --
and it is modified such that the following invariant is respected:
\begin{equation*}
\begin{array}{ll}
\mathrm{flattenRows}(m) = \mathrm{flattenRows}(m') & \forall\; (a, m, a'), (a', m', a'') \in A : \text{$a'$ equation node}\\
\mathrm{flattenColumns}(m) = \mathrm{flattenColumns}(m') & \forall\; (a, m, a'), (a', m', a'') \in A : \text{$a'$ variable node}\\
\end{array}
\end{equation*}
During this process, we build the augmenting path $p$ from the nodes, arcs, and matching matrices traversed.

Additionally,
the breadth-first-search does not immediately return a set of paths that respect the
\emph{non-intersection condition} already present in the Hopcroft-Karp algorithm.
An augmenting path intersects another if
the two paths, in at least one point, traverse the same node with intersecting
matching matrices.
In order to guarantee this property, each augmenting path is tested against
the others.
If two paths are intersecting, one of the two is discarded.

Discarding intersecting paths has the effect of eliminating multiple candidates which are
in mutual exclusion between each other.
The specific candidates being discarded at each step influence the solution
$\Omega$, and thus how close the solution is to the optimum.
Additionally, they impact the number of steps required by the matching algorithm.
In our current implementation, such selection depends on the ordering of $L$.
Our heuristic (implemented in the \emph{heuristicSort} procedure)
sorts $L$ based on the number of ones in the matching matrices of the path
(paths with more ones are prioritized).
Other heuristics could be devised to improve the solution $\Omega$ and the number of
steps required to complete the matching, and this could be an interesting direction for future work.

\begin{algorithm}[htb]
\Fn{bfs} {
  \textbf{Input}: $\vvv{G}, F$\;
  \textbf{Output}: $G_{BFS} = (A, B), L$\;

  $F' \leftarrow \{\}$\;
  $L \leftarrow \{\}$\;
  \While{$F \neq \{\} \myand P = \{\}$}{
    \ForEach{$a = (\vvv{n}, f) \in F$}{
      \ForEach{$\vvv{d}_{i,j}$ adjacent to $\vvv{n}$}{
        \uIf(move from equation to variable){$\vvv{n}$ is the equation $\vvv{e}_i$}{
          $S \leftarrow \mathrm{solveLocalMatchingProblem}((u_{i,j} \setminus m_{i,j}) \myand f)$\;
          \ForEach{$s \in S$}{
              $t \leftarrow$ vertical vector of size $|\vvv{v}_j|$ where $f_k = 1$ iff $\vvv{v}_{j,k}$ is not currently matched\;
              $m \leftarrow s \myand t$\;
              \uIf{$m \ne \emptyset$}{
                $a' = (\vvv{e}_i, \mathrm{flattenRows}(m))$\;
                $A \leftarrow A \union \{a'\}$,\,
                $B \leftarrow B \union \{(a, m, a')\}$,\,
                $L \leftarrow \mathrm{append}(L, \{a'\})$\;
              }\Else{
                $a' = (\vvv{e}_i, \mathrm{flattenRows}(s))$\;
                $A \leftarrow A \union \{a'\}$,\,
                $B \leftarrow B \union \{(a, s, a')\}$,\,
                $F' \leftarrow \mathrm{append}(F', \{a'\})$\;
              } 
          }
        }
        \ElseIf(move from variable to equation){$\vvv{n}$ is the variable $\vvv{v}_j$}{
          $S \leftarrow \mathrm{solveLocalMatchingProblem}(m_{i,j} \myand f)$\;
          \ForEach{$s \in S$}{
              $a' = (\vvv{v}_j, \mathrm{flattenColumns}(s))$\;
              $A \leftarrow A \union \{a'\}$,\,
              $B \leftarrow B \union \{(a, s, a')\}$,\,
              $F' \leftarrow \mathrm{append}(F', \{a'\})$\;
          }
        }
      }      
    }
    $F \leftarrow F'$,\,
    $F' \leftarrow \{\}$\;
  }
}
\caption{Breadth-first-search of augmenting paths in the matching graph.}
\label{alg:matching-bfs}
\end{algorithm}

The breadth-first-search procedure is shown in \cref{alg:matching-apply}.
It operates in the conventional way, with three additional constraints:
\begin{enumerate}
\item Each move in the search is associated with a \emph{vector of tangent elements} ${f}$
to the destination node $\vvv{n}$ of the move. $f$ is a vertical vector if $\vvv{n}$ is an
equation, otherwise $f$ is an horizontal vector.
\item Each move in the search must be associated with an incidence matrix $s$ called \emph{path matrix} that
satisfies the same conditions imposed on matching matrices $m_{i,j}$ (see \cref{sec:arrayawarematching},
conditions (3--6)).
\item The incidence matrix $s$ only affects the scalar variables or equations specified by $f$.%;
\end{enumerate}
These constraints ensure that each path traversed during the search corresponds
to a set of one or more equivalent paths in the scalar graph.
As a result, multiple different moves can start at the same moment from the
same node and through the same arc, but with different path matrices.
The computation of the set of possible moves at each iteration
is performed by enumerating the adjacent edges to the current
node, and then by using \emph{solveLocalMatchingProblem} to compute
the set of distinct valid ways to traverse that edge.
When a path reaches a variable node, and the path matrix contains at least one
non-zero column not corresponding to any matched scalar variable,
then it allows to augment the matching and the search is stopped.
Since the search process is stopped for the entire frontier,
all paths returned by the \emph{bfs} procedure have the same length in
terms of number of steps.

\subsection{Data representation to achieve O(1) complexity}
\label{sec:gmis-vaf}

To guarantee constant-time scaling with the size of array variables and equations,
every operation performed on incidence matrices described in \cref{sec:matrixops}
can be completed in constant time with respect to the size of the matrices themselves.

However, it can be easily recognised that in reality this is not possible without at least setting an upper bound on the size of the matrices, and setting such a bound would limit the applicability of our approach to large equation systems (which is precisely our goal).

As a consequence, while multidimensional incidence matrices and vectors
proved useful as a conceptual tool for explaining our methodology, as a
data structure they are not suitable \emph{as is} for implementation.

In order to achieve $\mathrm{O}(1)$ scaling we thus introduce two data
structures to replace multidimensional incidence matrices and vectors,
named \emph{Multidimensional Compressed Index Set} (MCIS) and
\emph{Multidimensional Compressed Index Map} (MCIM).
The first data structure, the MCIS, replaces multidimensional vectors,
and the second (MCIM) replaces multidimensional incidence matrices.

These two data structures are able to represent the entire range
of possible vectors and matrices that can appear in a matching problem,
but the operations on them are not constant-time in general.
However, the operations are largely constant-time wherever the arrays
in a model come from the spatial discretisation of partial-derivatives
differential equations, which cover virtually the totality of the
modelling cases of engineering interest.

Let us consider the set of multidimensional indices corresponding to the 1-elements
of a multidimensional vector.
Multidimensional Compressed Index Sets are data structures representing
sets of multidimensional indices as lists of \emph{multidimensional intervals}
or \emph{ranges}.

A \emph{multidimensional range} defined over field $\mathbb{K}=\mathbb{N}^n$ is a list of tuples $\{(a_1, b_1), (a_2, b_2), \dots (a_n, b_n)\}$, one tuple for each dimension in $\mathbb{K}$. It represents the set
\begin{equation*}
\{a_1, a_1+1, a_1+2, \dots b_1\} \times \{a_2, a_2+1, a_2+2, \dots b_2\} \times \dots
\end{equation*}
where $\times$ indicates the Cartesian product. For example, the multidimensional range $\{(1,3), (2,4)\}$ represents the following set of indices:
\begin{equation*}
\{(1,2),(2,2),(3,2),(1,3),(2,3),(3,3),(1,4),(2,4),(3,4)\}
\end{equation*}
In other words, a multidimensional range is defined as an hyperrectangle over field $\mathbb{K}$ represented with the coordinates of its vertices.

An MCIS is a list of multidimensional ranges, and it represents the union of the hyperrectangles represented in turn by each range in the list. Additionally, the ranges do not intersect.
A single multidimensional range of volume greater than 1 can also be represented as multiple adjacent ranges, and ranges may appear in any order.
As a result, the same index set can be represented with MCISes in multiple ways.

Let us now consider the set of multidimensional indices corresponding to
the 1-elements of a multidimensional incidence matrix.
Multidimensional Compressed Index Maps are data structures representing this
set of multidimensional indices, again as lists of intervals.

Each index in the set can be split in two parts: the sub-index corresponding
to the first set of dimensions $k$, and the sub-index corresponding to the second
set of dimensions $j$. For brevity we will represent each multidimensional index
in the set as $(k, j)$.
Now, to obtain an MCIM from a set of indices $A = \{(k_1, j_1), (k_2, j_2), \dots (k_n, j_n)\}$,
first we split $A$ in sub-sets $A_1, A_2, \dots A_m$
where each index $(k_i, j_i)$ satisfies the following identity:
\begin{equation}\label{eq:mcim-deriv-1}
\begin{array}{ll}
k_0 = \min_{1 < i \leq n}k_i\\
\delta = j_0 - k_0\\
j_i = k_i + \delta & \forall i: 1 < i \leq n\\
\end{array}
\end{equation}

In other words, the $j$ sub-indices in each subset $A_i$ must be such that they can be obtained just from the
corresponding $k$ sub-indices and the $\delta_i$, computed as described in \cref{eq:mcim-deriv-1}.
$\delta_i$ must be constant for all elements of each subset, but may be different between one
subset and another.
To make another comparison for the sake of explanation,
each subset $A_i$ represents a diagonal of an incidence matrix.

At this point, for each sub-set $A_i = \{(k_{i,1}, j_{i,1}), \dots (k_{i,n}, j_{i,n})\}$
consider the set $K_i = \{k_{i,0}, \dots k_{i,n}\}$.
We call a \emph{MCIM element} the tuple $\Sigma_i = (K_i, \delta_i)$, where $K_i$ is represented
as a Multidimensional Compressed Index Set.
Finally, an MCIM is the set of MCIM elements $\{\Sigma_1, \Sigma_2, \dots \Sigma_m\}$, each of
which corresponds to one of the sub-sets $A_i$.

The development of algorithms implementing the operations on the data structures
we have described is largely an engineering problem, and we do not
wish to delve into it.
Depending on the specific algorithms being chosen, the complexity of most operations
can range from an upper bound of $\mathrm{O}(n^2)$, to $\mathrm{O}(n)$ in the single-dimensional case
if the list of indices in each MCIS is kept ordered.
Better computational costs can be achieved by adopting well-known interval-tree
representations~\cite{bib:PreparataShamos-1985a}, which allow operations on MCISes to reach
$O(\log{n})$ complexity in some specific cases.

However, independently from the implementation we choose for operations on MCISes,
we can straightforwardly state that any operation on sets containing a single range
can be implemented in constant time.
The same holds for MCIMs with a single element $\Sigma_1$ containing a MCIS $K_1$
with a single range.
This will be the case for any model where the matching heuristic algorithm manages
to preserve all array equations and variables. Therefore, we expect that in such
cases -- which are typical in physical models~\cite{bib:AgostaEtAl-2019b} -- the
compilation process will be performed in constant time with respect to the size
of array equations and variables.

The only operation that can be performed in constant $\mathrm{O}(1)$ time on any
arbitrary MCIM is \emph{solveLocalMatchingProblem}.
In fact, it is trivial to prove that each MCIM element $\Sigma_i$
represents a matrix that satisfies the
properties outlined in \cref{sec:matrixops} for valid matrices returned
by \emph{solveLocalMatchingProblem}.
Therefore, \emph{solveLocalMatchingProblem} can be elided from the simplify
and matching algorithms, and the iteration on the matching options can
be replaced with the iteration of each $\Sigma_i$ in the input MCIM.

% !TEX root = ../TOMS-2023.tex

\section{Conclusions and future work}
\label{sec:conclusions}

We discussed the problem of translating declarative EB models into imperative code, concentrating on the crucial step of equation/variable matching. Relating our work to the research \emph{scenario}, we evidenced two issues to address. First, currently established approaches to EB-to-imperative translation handle array variables and equation looping constructs in an extremely inefficient manner. Second, array-aware proposals in the literature aim for a fast translation, but not for obtaining an efficient imperative code.

We showed that to pursue both the objectives above, one needs to translate an EB model in such a way to maximise the preservation of looping constructs in the imperative code, and we introduced a metric to measure the said preservation. This led us to define the concept of \emph{optimal} array-aware matching, as the one that maximises that metric.

As our main methodological contribution, we proved that the problem of computing an optimal matching in the sense just stated is NP-complete. Motivated by this completeness, we proposed an algorithm to compute an array-aware matching in polynomial time, whose \emph{rationale} is to stop at the first solution found, privileging however the preservation of looping constructs when choices need to be taken.

The ideas we presented are currently being put to work within the implementation of an experimental Modelica compiler~\cite{bib:AgostaEtAl-2019a,bib:AgostaEtAl-2019b}. Indeed, based also on the effort that such a realisation entails, we do hope that in the next years the developments we described will be adopted in the EB modelling community at large. In fact, the advantages of an array-aware EB-to-imperative translation are essential for addressing large-scale models in an industrial context.

In the future, we also plan to extend our approach to the rest of the translation pipeline, addressing other problems such as equation scheduling and SCC resolution. Additionally, a more precise characterization of the set of array graphs that can be optimally matched in polynomial time is also of great interest, as well as better heuristics for the \vam algorithm.

\section{Acknowledgements}
The authors would like to thank the anonymous reviewers for their helpful suggestions that allowed us to improve the paper presentation, in particular regarding the definition of array equation and its impact on the optimality metric, as well as the opportunity to perform optimal matching in polynomial time for certain classes of models.

\bibliographystyle{ACM-Reference-Format}
\bibliography{TOMS-2023}

\end{document}